 \providecommand{\F}{\mathbb{F}}
\title{Optimal locally repairable codes of distance $3$ and $4$ via cyclic codes}
\date{}
\newtheorem{lemma}{Lemma}[section]
\newtheorem{theorem}[lemma]{Theorem}
\newtheorem{ex}[lemma]{Example}
\newtheorem{defn}{Definition}
\theoremstyle{remark}
\newtheorem{rmk}{Remark}
\newtheorem{open}{Open problem}
\renewcommand{\epsilon}{\varepsilon}
\renewcommand{\le}{\leqslant}
\renewcommand{\ge}{\geqslant}
\newcommand{\vnote}[1]{}
\def \Xi {{X^{[i]}}}
\newcommand{\Ga}{\alpha}
\newcommand{\Gb}{\beta}
\newcommand{\Gg}{\gamma}     
\newcommand{\Gd}{\delta}
\newcommand{\Gs}{\sigma}
\def \ba {{\bf a}}
\def \bc {{\bf c}}
\def\Aut {{\rm Aut }}
\def\LRC {{\rm locally repairable code\ }}
\def\LRCs {{\rm locally repairable codes\ }}
\begin{document}

\author{Yuan Luo\thanks{Department of Computer Sciences and Engineering, Shanghai Jiaotong University, Shanghai 200240, P. R. China. { (email: luoyuan@cs.sjtu.edu.cn)}}, Chaoping Xing and Chen Yuan
\thanks{The authors are with Division of Mathematical Sciences, School
of Physical and Mathematical Sciences, Nanyang Technological
University, Singapore 637371, Republic of Singapore
(email: \{xingcp,YUAN0064\}@ntu.edu.sg).}
 }

\maketitle

\begin{abstract}
Like classical block codes, a locally repairable code also obeys the Singleton-type bound (we call a locally repairable code {\it optimal} if it achieves the Singleton-type bound). In the breakthrough work of \cite{TB14}, several classes of optimal locally repairable codes were constructed via subcodes of Reed-Solomon codes. Thus, the  lengths of the codes given in \cite{TB14} are upper bounded by the code alphabet size $q$. Recently, it was proved through extension of construction in \cite{TB14}   that  length of $q$-ary optimal locally repairable codes can be $q+1$ in \cite{JMX17}. Surprisingly, \cite{BHHMV16} presented a few examples of $q$-ary optimal locally repairable codes of small distance and locality with code length achieving roughly $q^2$. Very recently, it was further shown in \cite{LMX17} that there exist $q$-ary optimal locally repairable codes with length bigger than $q+1$ and  distance propositional to $n$.
 Thus, it becomes an interesting and challenging problem to construct new families of $q$-ary optimal locally repairable codes of length bigger than $q+1$.

 In this paper, we construct a class of  optimal locally repairable codes of distance $3$ and $4$ with unbounded length (i.e., length of the codes is independent of the code alphabet size). Our technique is through cyclic codes with particular generator and parity-check polynomials that are carefully chosen.
\end{abstract}
\section{Introduction}
Due to applications to distributed storage systems,  locally repairable codes  have recently attracted great attention of researchers \cite{HL07,GHSY12,PKLK12,SRKV13,JMX17,LMX17,FY14,PD14,TB14,TPD16,BTV17}. A local repairable code is nothing but a  block code with an additional parameter called {\it locality}.  For a locally repairable code $C$ of length $n$ with $k$ information symbols and locality $r$ (see the definition of locally repairable codes in Section \ref{subsec:2.1}), it was proved in \cite{GHSY12} that the minimum distance $d(C)$ of $C$ is upper bounded by
 \begin{equation}\label{eq:x1}
 d(C)\le n-k-\left\lceil \frac kr\right\rceil+2.
 \end{equation}
 The bound \eqref{eq:x1} is called the Singleton-type bound for locally repairable codes and was proved by extending the arguments in the proof of the classical Singleton bound on codes.  In this paper, we refer an optimal locally repairable code to a block code achieving the bound \eqref{eq:x1}.
 \subsection{Known results}
 The early constructions of optimal \LRCs gave codes with alphabet size that is exponential in code length (see \cite{HCL,SRKV13}. There  was also an earlier construction of optimal locally repairable codes given in \cite{PKLK12} with  alphabet  size comparable to code length. However, the construction in \cite{PKLK12} only produced  a specific value of the length $n$, i.e., $n=\left\lceil \frac kr\right\rceil(r+1)$. Thus, the rate of the code is very close to $1$. There are also some existence results given in \cite{PKLK12} and \cite{TB14} with less restriction on locality $r$. But the results in both the papers require large alphabet size which is an exponential function of the code length.

 A recent breakthrough construction  given  in \cite{TB14} makes use of subcodes of Reed-Solomon codes. This construction produces  optimal \LRCs with length linear  in  alphabet size although the length of codes is upper bounded by alphabet size.  The construction in \cite{TB14} was extended via automorphism group of rational function fields by Jin, Ma and Xing \cite{JMX17} and it turns out that there are more flexibility on locality and the code length can go up to $q+1$, where $q$ is the alphabet size.

 Based on the classical MDS conjecture, one should wonder if $q$-ary optimal  locally repairable codes can have length bigger than $q+1$. Surprisingly, it was shown in \cite{BHHMV16} that there exist $q$-ary optimal locally repairable codes of length exceeding  $q+1$.  Although \cite{BHHMV16} produced a few optimal  locally repairable codes with specific parameters,
 it paves a road for people to continue search for such optimal codes. Very recently, it was shown in \cite{LMX17} via elliptic curves that there exist $q$-ary optimal  locally repairable codes with length $n$ bigger than $q+1$ and distance proportional to  length $n$.

 \subsection{Our result}
 In this paper, by carefully choosing generator polynomials, we can show that there exist   optimal  locally repairable codes of distance $3$ and $4$ that are cyclic. One feature of these codes is that their lengths are unbounded, i.e., lengths are independent of the code alphabet sizes.  More precisely, we have the following main result in this paper.
\begin{theorem}\label{thm:1.1} Let $q$ be a prime power. Assume that $n$ is a  positive integer with $\gcd(n,q)=1$. Then
\begin{itemize}
\item[{\rm (i)}]  there is a $q$-ary   optimal $\left[n,k=n-\left(1+\frac n{r+1}\right),3\right]$ cyclic locally repairable  code with locality $r$
if $r\ge 2$ and $\gcd(n,q-1)\equiv 0\pmod{r+1}$; and
\item[{\rm (ii)}]  there is a $q$-ary   optimal $\left[n,k=n-\left(2+\frac n{r+1}\right),4\right]$ cyclic locally repairable  code with locality $r$
if $r\ge 3$, $\gcd(n,q-1)\equiv 0\pmod{r+1}$ and $\gcd\left(\frac{n}{r+1},r+1\right)$ divides $2$.
\end{itemize}
\end{theorem}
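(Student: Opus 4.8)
The plan is to realize the claimed codes explicitly as cyclic codes by prescribing their generator polynomials as products of minimal polynomials of carefully chosen roots, and then to verify two things: that the resulting code has the right dimension and minimum distance, and that it admits a partition of its coordinate set into local groups each supporting an MDS local code of length $r+1$ and dimension $r$. Throughout, let $\beta$ be a primitive $n$-th root of unity in an extension of $\F_q$ (it exists since $\gcd(n,q)=1$), and let $m=(r+1)\mid \gcd(n,q-1)$; the divisibility $m\mid q-1$ guarantees that $\F_q$ contains a primitive $m$-th root of unity, and $m\mid n$ lets us split $\{0,1,\dots,n-1\}$ into the $n/m$ arithmetic progressions of common difference $n/m$, which will be the index sets of the local groups.

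For part (i), distance $3$: I would take the generator polynomial $g(x)$ to vanish at $\beta^{i}$ for a set of exponents engineered so that (a) $g(x)$ has degree $n-k = 1 + n/(r+1)$, and (b) the cyclic code $C=\langle g\rangle$ has $d(C)\ge 3$. For (b) the natural device is the BCH-type condition: arrange that $g$ vanishes at two consecutive powers of $\beta$ (giving $d\ge 3$ by the BCH bound), or more flexibly that the defining set contains a suitable pattern forcing $d\ge 3$. Simultaneously, the locality structure comes from the factor $(x^{n/m}-1)$ (whose roots are the $(n/m)$-th roots of unity, i.e. $\beta^{mj}$): dividing the coordinates into cosets modulo the subgroup of $m$-th powers and checking that the local restriction of $C$ on each coset is a shortened/punctured $[m,m-1,2]$ code, i.e. a single parity check, which has locality $r=m-1$. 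The dimension count $k = n - 1 - n/(r+1)$ then matches, and since $d=3$, $\lceil k/r\rceil = \lceil (n - 1 - n/(r+1))/(r+1-1)\rceil$ should be checked to equal $n-k-1 = n/(r+1)$, confirming optimality in \eqref{eq:x1}; this arithmetic identity is where the hypothesis $(r+1)\mid \gcd(n,q-1)$ is used to make $n/(r+1)$ an integer and the ceiling collapse cleanly.

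For part (ii), distance $4$: the idea is the same but now $g(x)$ must have degree $2 + n/(r+1)$ and the code must have $d\ge 4$. I would enlarge the defining set so that it contains three consecutive powers of $\beta$ (BCH bound giving $d\ge 4$) while keeping one extra factor that produces the all-ones local parity on each progression-coset of size $r+1\ge 4$, so that the local codes are $[r+1, r, 2]$ MDS. The extra congruence hypothesis $\gcd\!\big(\tfrac{n}{r+1}, r+1\big)\mid 2$ is exactly what is needed to guarantee that the prescribed $n-k$ roots are distinct and that the factor $(x^{n/(r+1)}-1)$ interacts correctly (no unwanted repeated roots, or: the relevant two extra "global" roots are not already forced by the local structure), so that $\deg g$ comes out to $2 + n/(r+1)$ on the nose; I would isolate this as a short lemma on when $\gcd$ of the two moduli divides $2$. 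Optimality again reduces to the identity $\lceil k/r\rceil = n-k-2$ with $k = n-2-n/(r+1)$.

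The main obstacle I anticipate is the simultaneous control of two competing requirements on the single polynomial $g$: the BCH-type consecutive-root condition that forces the distance up, and the "$(x^{n/(r+1)}-1)$-factor" condition that forces the local MDS structure — these prescribe overlapping sets of roots, and one must check the overlap is exactly of the expected size so that $\deg g$ is neither too large (which would kill the dimension) nor too small (which would kill the distance). Getting $\deg g = 1 + n/(r+1)$ (resp. $2 + n/(r+1)$) exactly, rather than up to an error term, is the delicate point, and it is precisely there that the number-theoretic side conditions in the hypotheses enter. A secondary technical point is verifying that each local code really is MDS of dimension $r$ (not smaller), i.e. that restricting $C$ to a coset does not accidentally impose more than one parity check; this should follow from a rank/Vandermonde computation on the relevant roots of unity, using again that $\F_q$ contains the $m$-th roots of unity.
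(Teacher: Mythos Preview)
Your overall architecture—realize the code as cyclic, force the distance via consecutive roots of $g$, and force locality via a factor of the shape $x^{n/(r+1)}-c$—is exactly the paper's. For part~(i) one small correction: the paper takes $c=\alpha:=\beta^{n/(r+1)}\in\F_q\setminus\{0,1\}$ rather than $c=1$. The point is that the roots of $x^{n/(r+1)}-\alpha$ are the $\beta^i$ with $i\equiv 1\pmod{r+1}$, so $\beta^1$ is already among them and adjoining the single $\F_q$-rational root $\beta^0=1$ yields two consecutive roots at cost exactly~$1$ to $\deg g$; with your choice $c=1$ you would instead have to adjoin $\beta^{\pm1}$, which is typically not in $\F_q$, and its whole $q$-cyclotomic coset would come along. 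Your locality verification via restriction to cosets is different from the paper's (which simply exhibits a weight-$(r{+}1)$ word in $C^\perp$ and uses that cyclic codes are transitive), but either route works.

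The genuine gap is in part~(ii). You plan to get $d\ge 4$ from the BCH bound by placing \emph{three} consecutive powers of $\beta$ among the zeros of $g$. This cannot be done while keeping $\deg g=2+\tfrac{n}{r+1}$: starting from $(x-1)(x^{n/(r+1)}-\alpha)$, whose zeros are $\beta^0$ together with $\{\beta^i:i\equiv 1\pmod{r+1}\}$, a third consecutive root would have to be $\beta^{-1}$ or $\beta^{2}$, neither of which lies in $\F_q$ when $n$ is large compared to $q$, so adding it drags in its full cyclotomic coset and overshoots the degree. The paper does something different. It sets
\[
g(x)=(x-1)(x-\gamma)\bigl(x^{n/(r+1)}-\alpha\bigr),
\]
where $\gamma\in\F_q$ is an $(r{+}1)$-th root of unity chosen so that $\gamma^{\,n/(r+1)}=\alpha^{2}$; the hypothesis $\gcd\!\bigl(\tfrac{n}{r+1},\,r+1\bigr)\mid 2$ is precisely what makes such a $\gamma$ exist (not, as you guessed, a distinct-roots issue). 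The resulting zero set still contains only the two consecutive roots $\beta^0,\beta^1$, so BCH gives only $d\ge 3$. The paper then proves $d\ge 4$ directly: assuming a nonzero codeword $c_0+c_1x^{j}+c_2x^{k}$ of weight $\le 3$, it evaluates at the root triples $(\beta,\beta^{r+2},\beta^{2r+3})$ and $(1,\beta,\beta^{r+2})$ to force $(r+1)j\equiv(r+1)k\equiv 0\pmod n$, and then evaluates at $(1,\beta,\gamma)$; the relation $\gamma^{\,n/(r+1)}=\alpha^{2}$ converts that last $3\times3$ determinant into a nonvanishing Vandermonde in distinct powers of the primitive $(r{+}1)$-th root $\alpha$, a contradiction. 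That determinant step, and the specific choice of $\gamma$ enabling it, is the key idea your outline is missing.
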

 \begin{rmk}
 \begin{itemize}
\item[{\rm (i)}]  The conditions in Theorem \ref{thm:1.1} are easy to be satisfied and the length $n$ is unbounded.
\item[{\rm (ii)}]  Indeed, parameters given in Theorem \ref{thm:1.1} satisfy the equality of \eqref{eq:x1}. Let us verify this only for part (i) of Theorem \ref{thm:1.1}. We have
\[n-k-\left\lceil \frac kr\right\rceil+2=1+\frac{n}{r+1}-\left\lceil\frac{ n-1-\frac{n}{r+1}}r\right\rceil+2=3+\frac{n}{r+1}-\left\lceil\frac{n}{r+1}-\frac{1}r\right\rceil=3.\]
\end{itemize}
 \end{rmk}

 \subsection{Open problems}
In view of the known results and our result in this paper, all known optimal $q$-ary \LRCs with length $n$ and minimum distance  satisfy either (i) $d$ is small and $n$ is unbounded; or (ii) $d$ is proportional to $n$ and $q$ is linear in $n$. One natural  question is
\begin{open} Are there optimal $q$-ary \LRCs with length $n$ much bigger than $q$ (for instance $n=\Omega(q^2)$) and minimum distance proportional to $n$?
\end{open}
The other open problem is the following.
\begin{open} Are there  optimal $q$-ary \LRCs with unbounded length $n$ for every constant $d$.
\end{open}
 \subsection{Organization of the paper}
 The paper is organized as follows. In Section 2, we provide some preliminaries on \LRCs and cyclic codes. In Section 3, we  present proof of  Theorem \ref{thm:1.1}. In addition, we also give a construction of optimal $q$-ary \LRCs of length $2(q-1)$ and distance $4$.
\section{Preliminaries}
In this section, we present some preliminaries on locally repairable codes and cyclic codes.
\subsection{Locally repairable codes}\label{subsec:2.1}
Informally speaking, a block code is said with locality $r$ if  every coordinate of a given codeword can be recovered by accessing at most $r$ other coordinates of this codeword. Precisely speaking,  a locally repairable code with locality $r$ is given as follows.
\begin{defn}
Let $C\subseteq \F_q^n$ be a $q$-ary block code of length $n$. For each $\Ga\in\F_q$ and $i\in \{1,2,\cdots, n\}$, define $C(i,\Ga):=\{\bc=(c_1,\dots,c_n)\in C\; | \; c_i=\Ga\}$. For a subset $I\subseteq \{1,2,\cdots, n\}\setminus \{i\}$, we denote by $C_{I}(i,\Ga)$ the projection of $C(i,\Ga)$ on $I$.
Then $C$ is called a locally repairable code with locality $r$ if, for every $i\in \{1,2,\cdots, n\}$, there exists a subset
$I_i\subseteq \{1,2,\cdots, n\}\setminus \{i\}$ with $|I_i|\le r$ such that  $C_{I_i}(i,\Ga)$ and $C_{I_i}(i,\Gb)$ are disjoint for any $\Ga\neq \Gb\in\F_q$.
\end{defn}
Apart from the usual parameters: length, rate and minimum distance,  the locality of a   locally repairable code plays a crucial role. In this paper, we always consider locally repairable codes that are linear over $\F_q$. Thus, a $q$-ary \LRC of length $n$, dimension $k$, minimum distance $d$ and locality $r$ is said to be an $[n,k,d]_q$-\LRC with locality $r$.
\subsection{Cyclic codes}
Cyclic codes are well known and understood in the coding community. We briefly introduce them and list some useful facts on cyclic codes in this subsection. A $q$-ary cyclic code $C$ is identified with an ideal of the ring $\F_q[x]/(x^n-1)$. As every ideal of $\F_q[x]/(x^n-1)$ is a principal ideal generated by a divisor $g(x)$ of $x^n-1$, we can just write $C=\langle g(x)\rangle$. The dimension of $C$ is $n-\deg(g(x))$. Furthermore, $c(x)$ is a codeword of $C$ if and only if $c(x)$ is divisible by $g(x)$. In other words, if $\Ga_1,\dots,\Ga_{n-k}\in\bar{\F}_q$ are all $n-k$ roots of $g(x)$ (where $\bar{\F}_q$ stands for algebraic closure of $\F_q$), then $c(x)$ belongs to $C$ if and only if $c(\Ga_i)=0$ for all $i=1,2,\dots,n-k$.

Let $C=\langle g(x)\rangle$ for a divisor $g(x)$ of $x^n-1$. Put $h(x)=\frac{x^n-1}{g(x)}$. Let $\tilde{h}(x)$ be the reciprocal polynomial of $h(x)$. Then the Euclidean dual code  of $C$ is $C^{\perp}=\langle\tilde{h}(x)\rangle$. We recall some facts about cyclic codes without giving proof. The reader may refer to the books \cite{MS,LX}.
\begin{lemma}\label{lem:2.1}
\begin{itemize}
\item[{\rm (i)}] Let $h(x)$ be  a divisor of $x^n-1$, then the codes $\langle{h}(x)\rangle$ and $\langle\tilde{h}(x)\rangle$ are equivalent.
\item[{\rm (ii)}] Let $C=\langle g(x)\rangle$ a $q$-ary cyclic code of length $n$ for a divisor $g(x)$ of $x^n-1$ and assume that $\Ga_1,\dots,\Ga_t$ are roots of $g(x)$. If $C$ has a nonzero codeword $c(x)=\sum_{j=1}^tc_{i_j}x^{i_j}$ of weight at most $t$, then the determinant
\[\det\left(\Ga_\ell^{i_j}\right)_{1\le \ell\le t,1\le j\le t}\]
is zero.
\item[{\rm (iii)}] Let $C=\langle g(x)\rangle$ be a $q$-ary cyclic code of length $n$ for a divisor $g(x)$ of $x^n-1$. Let $\beta\in\bar{\F}_q$ be an $n$th primitive  root of unity. If there exist an integer $t$ and a positive integer $\Gd$ such that $\Gb^t,\Gb^{t+1},\dots,\Gb^{t+\Gd-2}$ are roots of $g(x)$, then $C$ has minimum distance at least $\Gd$.
\end{itemize}
\end{lemma}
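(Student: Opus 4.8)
The plan is to establish the three parts separately, each via a short linear-algebra argument of the classical BCH type. For part (i), I would produce an explicit equivalence by coordinate reversal. Represent a codeword of $\langle h(x)\rangle$ as a polynomial $c(x)$ of degree less than $n$, and consider the map $c(x)\mapsto c^*(x):=x^{n-1}c(1/x)$, which on coordinate vectors is the permutation $(c_0,c_1,\dots,c_{n-1})\mapsto(c_{n-1},c_{n-2},\dots,c_0)$. Since $h(x)$ divides $x^n-1$ we have $h(0)\neq0$, so $\deg\tilde h(x)=\deg h(x)=:m$; writing $c(x)=h(x)a(x)$ with $\deg a(x)\le n-1-m$ gives $c^*(x)=\bigl(x^mh(1/x)\bigr)\bigl(x^{n-1-m}a(1/x)\bigr)=\tilde h(x)\cdot b(x)$ for a polynomial $b(x)$, so $c^*(x)\in\langle\tilde h(x)\rangle$. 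As reversal is a coordinate permutation of $\F_q^n$ and the two codes have the same dimension $n-m$, it maps $\langle h(x)\rangle$ bijectively onto $\langle\tilde h(x)\rangle$, which is exactly equivalence.

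For part (ii), since $c(x)\in C=\langle g(x)\rangle$, every root of $g(x)$ is a root of $c(x)$; evaluating $c(x)=\sum_{j=1}^t c_{i_j}x^{i_j}$ at $\Ga_\ell$ yields $\sum_{j=1}^t c_{i_j}\Ga_\ell^{i_j}=0$ for $\ell=1,\dots,t$. This is a homogeneous linear system in the $t$ unknowns $c_{i_1},\dots,c_{i_t}$ with coefficient matrix $\bigl(\Ga_\ell^{i_j}\bigr)_{1\le\ell,j\le t}$, and since $c(x)\neq0$ the vector $(c_{i_1},\dots,c_{i_t})$ is a nonzero solution; hence the determinant vanishes.

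For part (iii), suppose toward a contradiction that $C$ has a nonzero codeword $c(x)$ of weight $w\le\Gd-1$, with support $\{i_1,\dots,i_w\}\subseteq\{0,1,\dots,n-1\}$ and nonzero coefficients $c_{i_1},\dots,c_{i_w}$. Since $w-1\le\Gd-2$, the elements $\Gb^t,\Gb^{t+1},\dots,\Gb^{t+w-1}$ are roots of $g(x)$, so part (ii) forces the $w\times w$ matrix $\bigl(\Gb^{(t+s)i_j}\bigr)_{0\le s\le w-1,\,1\le j\le w}$ to be singular. But this matrix factors as $V\cdot D$ with $D=\mathrm{diag}\bigl(\Gb^{ti_1},\dots,\Gb^{ti_w}\bigr)$ invertible and $V=\bigl((\Gb^{i_j})^{s}\bigr)_{0\le s\le w-1,\,1\le j\le w}$ a Vandermonde matrix whose nodes $\Gb^{i_1},\dots,\Gb^{i_w}$ are pairwise distinct, because $\Gb$ is a primitive $n$th root of unity and $0\le i_1<\dots<i_w<n$. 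Thus the matrix is invertible, a contradiction, and we conclude $d(C)\ge\Gd$.

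None of these steps is genuinely hard; the only point requiring a moment's care is the factorization in part (iii), namely recognizing the shifted power matrix $\bigl(\Gb^{(t+s)i_j}\bigr)$ as a product of a true Vandermonde matrix and an invertible diagonal matrix and checking that the Vandermonde nodes are distinct — this is precisely the computation underlying the classical BCH bound, and it is what makes part (iii) quantitatively sharper than the generic vanishing-minor statement of part (ii).
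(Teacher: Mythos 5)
Your proposal is correct: all three parts are proved exactly by the standard textbook arguments (coordinate reversal via the reciprocal polynomial for (i), the vanishing of the evaluation determinant for (ii), and the Vandermonde-times-diagonal factorization underlying the BCH bound for (iii)), and each step checks out, including the needed observations that $h(0)\neq 0$ and that the nodes $\Gb^{i_1},\dots,\Gb^{i_w}$ are distinct. The paper itself gives no proof of this lemma, simply citing the books \cite{MS,LX}, and your write-up is precisely the argument those references contain, so there is nothing to reconcile.
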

Let $S_n$ denote the symmetric group of length $n$. An automorphism of a $q$-ary block code $C$ is a permutation $\Gs\in S_n$ satisfying that $(c_{\Gs(1)},c_{\Gs(2)},\dots,c_{\Gs(n)})\in C$ whenever $(c_1,c_2,\dots,c_n)\in C$. All automorphisms $C$ form a subgroup of $S_n$, denoted by $\Aut(C)$. It is called the {\it automorphism group} of $C$. The code $C$ is called {\it transitive} if, for any $i,j\in\{1,2,\dots,n\}$, there exists an automorphism $\Gs\in \Aut(C)$ such that $\Gs(i)=j$.

If $C$ is the cyclic code, then $\Aut(C)$ contains the subgroup generated by the cyclic shift $(12\cdots n)$. Thus, $C$ is transitive.

\section{Constructions}
In this section, we first give a general result on locality for transitive codes and then apply it to the proof of Theorem \ref{thm:1.1}. In addition, we also present a construction of $q$-ary $[2(q-1),4]$ \LRC with locality $r$.
\subsection{A general result}
\begin{lemma}\label{lem:3.1} If $C$ is a $q$-ary transitive linear code with dual distance $d^{\perp}\le r+1$, then $C$ has locality $r$.
\end{lemma}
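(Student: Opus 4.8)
The plan is to exhibit, for each coordinate $i\in\{1,\dots,n\}$, a repair set $I_i$ of size at most $r$, using a single ``local parity check'' coming from the dual code and then transporting it around by transitivity. First I would use the hypothesis $d^\perp\le r+1$: pick a nonzero codeword $\bv=(v_1,\dots,v_n)\in C^\perp$ of minimum weight $w=d^\perp\le r+1$, and let $J=\Supp(\bv)\subseteq\{1,\dots,n\}$, so $|J|=w\le r+1$. Fix any $j_0\in J$. Since $\bv\in C^\perp$, every codeword $\bc\in C$ satisfies $\sum_{\ell\in J} v_\ell c_\ell=0$, hence $c_{j_0}=-v_{j_0}^{-1}\sum_{\ell\in J\setminus\{j_0\}} v_\ell c_\ell$; that is, coordinate $j_0$ is recovered from the set $I_{j_0}:=J\setminus\{j_0\}$, which has size $w-1\le r$. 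In the language of the definition, the projections $C_{I_{j_0}}(j_0,\Ga)$ for distinct $\Ga$ are disjoint, because the affine-linear functional above is constant ($=\Ga$) on each such projection and takes different values for different $\Ga$.

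Next I would remove the restriction ``$i\in J$'' using transitivity. Given an arbitrary coordinate $i$, choose $\Gs\in\Aut(C)$ with $\Gs(j_0)=i$ (here $i$ plays the role of ``$j$'' in the definition of transitive). Set $I_i:=\Gs(I_{j_0})=\{\Gs(\ell):\ell\in J\setminus\{j_0\}\}\subseteq\{1,\dots,n\}\setminus\{i\}$, still of size $\le r$. The linear relation $c_{j_0}=-v_{j_0}^{-1}\sum_{\ell\in I_{j_0}} v_\ell c_\ell$ valid on $C$ is pulled back along $\Gs^{-1}$ to a linear relation expressing $c_i$ as a fixed $\F_q$-linear combination of $\{c_m: m\in I_i\}$ valid on $C$ (since $\Gs\in\Aut(C)$ permutes codewords). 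Equivalently, $\Gs$ applied to $\bv$ is again in $C^\perp$ with support $\Gs(J)\ni i$, so the argument of the previous paragraph applies verbatim at $i$. Hence $C_{I_i}(i,\Ga)$ and $C_{I_i}(i,\Gb)$ are disjoint for $\Ga\neq\Gb$, and $C$ has locality $r$.

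The only mild subtlety—not really an obstacle—is to make sure the disjointness condition in the definition is phrased correctly: two vectors in $C(i,\Ga)$ and $C(i,\Gb)$ that agreed on $I_i$ would, by the recovered linear relation, force $\Ga=\Gb$, a contradiction; so the projections are genuinely disjoint. One should also note that $|I_i|$ could be strictly less than $r$ (when $d^\perp<r+1$), which is fine since the definition only requires $|I_i|\le r$. No calculations beyond inverting a single nonzero field element are needed; the content is entirely the dual-parity-check plus transitivity bookkeeping.
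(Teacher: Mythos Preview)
Your proof is correct and follows essentially the same approach as the paper: pick a minimum-weight dual codeword to get a local parity check at one coordinate, then transport it to every coordinate via transitivity. The only cosmetic differences are that the paper pads the support to size exactly $r+1$ (whereas you note $|I_i|\le r$ suffices) and you spell out the disjointness condition in the definition more explicitly.
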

\begin{proof} Let $\ba=(a_1,a_2,\dots,a_n)$ be a codeword of $C^{\perp}$ with the Hamming weight $d^{\perp}$. Let $I$ denote the support of $\ba$. Then $|I|= d^{\perp}\le r+1$. Let $J$ be a subset of $\{1,2,\dots,n\}$ such that $|J|=r+1$ and $I\subseteq J$. Choose an element $i\in I$. Then for every codeword $\bc=(c_1,c_2,\dots,c_n)\in C$, we have $0=\sum_{j\in I}a_jc_j=\sum_{j\in J}a_jc_j$. This gives $c_i=-a_i^{-1}\sum_{j\in J\setminus\{i\}}a_jc_j$. This implies that $c_i$ can be repaired by $\{c_j\}_{j\in J\setminus\{i\}}$. Now for any $\ell\in \{1,2,\dots,n\}$, there exists an automorphism $\Gs\in S_n$ such that $\Gs(i)=\ell$. This implies that
\[0=\sum_{j\in I}a_jc_{\Gs(j)}=\sum_{j\in J}a_jc_{\Gs(j)}.\]
Hence, $c_\ell=c_{\Gs(i)}=-a_i^{-1}\sum_{j\in J\setminus\{i\}}a_jc_{\Gs(j)}$. The proof is completed.
\end{proof}
\begin{ex}{\rm Let $q$ be a prime power. Let $r,n$ be positive integers with $r\ge 2$, $n|(q-1)$ and $(r+1)|n$. We present a $q$-ary optimal cyclic locally repairable code of length $n$ minimum distance $d$ and locality $r$ for any $1\le d\le n$.
Although such a code was already given in~\cite{TB14}, we provide different view for this code.
Such an observation finally leads to  discovery of  optimal locally repairable codes with unbounded length given in Theorem \ref{thm:1.1}.
Let $\beta$ be an $n$th primitive  root of unity. For a positive integer $d$ with $1\le d\le n$, $d$ can be uniquely written as  $d=(r+1)a+b$ for some integer $a\ge 0$ and integer  $b\in \{0,1,\ldots,r\}$.

{\bf Case 1.} $b\in \{2,3,4,\ldots,r\}$.

Let
$$g(x)=\prod_{i=0}^{d-2}(x-\beta^i)\times\prod_{j=a+1}^{\frac{n}{r+1}-1}(x-\beta^{(r+1)j+b-2}).$$
It is obvious that $g(x)|(x^n-1)$ since all roots of $g(x)$ are the $n$th roots of unity and they are distinct.
We denote by $C$ the cyclic code with generator polynomial $g(x)$. Due to the fact that $g(x)$ contains the
$d-1$  roots $1,\beta,\ldots,\beta^{d-2}$, it follows from Lemma \ref{lem:2.1}(i) that $C$ has the minimum distance at least $d$. Moreover, the dimension of $C$ is
$k:=n-\deg(g(x))=n-\frac{n}{r+1}-(d-a-2)=\frac{rn}{r+1}-ar-b+2$.
Thus, we have
\[n-k-\left\lceil \frac kr\right\rceil+2=\frac{n}{r+1}+ar+b-2-\left(\frac{n}{r+1}-a+\left\lceil\frac{-b+2}{r}\right\rceil\right)+2=a(r+1)+b=d.\]
Thus,
to show that $C$ satisfies the Singleton-type bound \eqref{eq:x1}, it remains to show that the locality of $C$ is $r$. By Lemma \ref{lem:3.1}, it is sufficient to the dual distance of $C$ is at most $r+1$. By Lemma \ref{lem:2.1}(i), it is sufficient to show that the cyclic code $\langle h(x)\rangle$ has minimum distance at most $r+1$, where  $h(x)=\frac{x^n-1}{g(x)}$.
Observe that
$$\prod_{j=0}^{\frac{n}{r+1}-1}(x-\beta^{(r+1)j+b-2})=x^{\frac{n}{r+1}}-\beta^{\frac{(b-2)n}{r+1}}.$$
Thus, we write
$$
g(x)=(x^{\frac{n}{r+1}}-\beta^{\frac{(b-2)n}{r+1}})\prod_{i=0,i\neq b-2\bmod{r+1}}^{d-2}(x-\beta^i).
$$
Then \[h(x)\prod_{i=0,i\neq b-2\bmod{r+1}}^{d-2}(x-\beta^i)=\frac{x^n-1}{x^{\frac{n}{r+1}}-\beta^{\frac{(b-2)n}{r+1}}}=
\sum_{i=0}^r\beta^{\frac{(b-2)in}{r+1}}x^{\frac{(r-i)n}{r+1}}\]
is a codeword of $\langle h(x)\rangle$ and it has Hamming weight $r+1$. This implies that $C^{\perp}$ has minimum distance at most $r+1$.

{\bf Case 2.} $b\in \{0,1\}$.

In this case, we define the cyclic code $C$ generated by
$$g(x)=\prod_{i=0}^{d-2}(x-\beta^i)\times\prod_{j=a+1}^{\frac{n}{r+1}}(x-\beta^{(r+1)j+b-2}).$$
Compared with Case $1$, the degree of $g(x)$ increases by $1$ due to
fact that $b-2$ is negative.
We can mimic the proof of Case 1 and skip the detail.
}\end{ex}

\begin{ex}
{\rm Let $q$ be a prime power. Let $r,n$ be positive integers with $r\ge 2$, $n|(q+1)$ and $(r+1)|n$. We present a $q$-ary optimal cyclic locally repairable code of length $n$, locality $r$ and minimum distance $d=a(r+1)+b$ with $2|a$ and $b\in\{2,4,6,\ldots,2\lceil \frac{r-1}{2}\rceil\}$. The codes in this example were founded in \cite{JMX17} already, but we provide a new view on the construction.

Since $n|(q+1)$, we have $(x^n-1)|(x^{q^2-1}-1)$. This implies that each irreducible factor of $x^n-1$ is either linear or quadratic. If an irreducible factor $f(x)$ of $x^n-1$ is  quadratic and $\Gg$ is a root of $f(x)$, then the other root is $\Gg^q$.  As $n|(q+1)$ and $\Gg^n=1$, we have $\Gg^{q+1}=1$. This gives $\Gg^q=\Gg^{-1}$. In conclusion, for each $n$th root of unity $\gamma$, $(x-\gamma)(x-\gamma^{-1})$ is a polynomial over $\F_q$.

Let $\beta$ be an $n$th primitive  root of unity.
Let
$$g(x)=\prod_{i=-\frac{d-2}{2}}^{\frac{d-2}{2}}(x-\beta^i)\times\prod_{j=\frac{a+2}{2}}^{\frac{n}{r+1}- \frac{a+2}{2}}(x-\beta^{(r+1)j}).$$
We first show that $g(x)$ is defined over $\F_q$. If $\beta^i$ is a root of $g(x)$, then either
$-\frac{d-2}{2}\le i\le \frac{d-2}{2}$ or $(r+1)|i$. This implies that both $\beta^i$ and $\beta^{-i}$ are the roots of $g(x)$.
Thus, $g(x)$ is the product of $x-1$ with some polynomials of the form $(x-\beta^i)(x-\beta^{-i})$. We conclude that $g(x)$ is defined over $\F_q$. It is clear that $g(x)|(x^n-1)$ since all roots of $g(x)$ are  $n$th roots of unity and they are distinct.
Denote by $C$ the cyclic code with generator polynomial $g(x)$.
Due to the fact that $g(x)$ contains the
$d-1$  roots $\beta^{-\frac{d-2}{2}},\ldots,\beta^{\frac{d-2}{2}}$, it follows from Lemma \ref{lem:2.1}(i) that $C$ has the minimum distance at least $d$.
Moreover, the dimension of $C$ is
$k:=n-\deg(g(x))=n-\frac{n}{r+1}-(d-a-2)=\frac{rn}{r+1}-ar-b+2$.
Thus, we have
\[n-k-\left\lceil \frac kr\right\rceil+2=\frac{n}{r+1}+ar+b-2-\left(\frac{n}{r+1}-a+\left\lceil\frac{-b+2}{r}\right\rceil\right)+2=a(r+1)+b=d.\]
To show that $C$ satisfies the Singleton-type bound \eqref{eq:x1}, it remains to show that the locality of $C$ is $r$. By Lemma \ref{lem:3.1}, it is sufficient to show that the dual distance of $C$ is at most $r+1$. By Lemma \ref{lem:2.1}(i), it is sufficient to show that the cyclic code $\langle h(x)\rangle$ has minimum distance at most $r+1$, where  $h(x)=\frac{x^n-1}{g(x)}$.
Observe that
$$
g(x)=(x^{\frac{n}{r+1}}-1)\prod_{i=\frac{2-d}{2},(r+1)\nmid i}^{\frac{d-2}{2}}(x-\beta^i).
$$
Then \[h(x)\prod_{i=\frac{2-d}{2},(r+1)\nmid i}^{\frac{d-2}{2}}(x-\beta^i)=\frac{x^n-1}{x^{\frac{n}{r+1}}-1}=
\sum_{i=0}^rx^{\frac{(r-i)n}{r+1}}\]
is a codeword of $\langle h(x)\rangle$ and it has Hamming weight $r+1$. This implies that $C^{\perp}$ has minimum distance at most $r+1$.
}
\end{ex}

\subsection{ Proof of part (i) of Theorem \ref{thm:1.1}}.
\begin{proof}
Let $\beta\in\bar{\F}_q$ be an $n$th primitive  root of unity. Put $\alpha=\beta^{\frac{n}{r+1}}$. Then $\Ga\neq 1$ and $\alpha^{q-1}=\left(\beta^n\right)^{\frac{q-1}{r+1}}=1$, i.e., $\Ga$ is an element of $\F_q\setminus\{0,1\}$.
Let $g(x)=(x-1)(x^{\frac{n}{r+1}}-\alpha)$.  

Note that
$$(x^{\frac{n}{r+1}}-\alpha)\left(\sum_{i=0}^r \alpha^{r-i}x^{\frac{in}{r+1}}\right )=x^n-\alpha^{r+1}=x^n-1.$$
This means that $x^{\frac{n}{r+1}}-\alpha$ is a factor of $x^n-1$. Furthermore, since $\gcd(x-1,x^{\frac{n}{r+1}}-\alpha)=1$, we have
$g(x)|(x^n-1)$. Let $C$ be the cyclic code generated  by  $g(x)$.

Let us first show that the locality of $C$ is $r$. By Lemma \ref{lem:3.1}, it is sufficient to show that the Hamming distance of $C^{\perp}$ is at most $r+1$. Put $h(x)=\frac{x^n-1}{g(x)}$. By Lemma \ref{lem:2.1}(i), it is sufficient to show that the Hamming distance of $\langle h(x)\rangle$ is at most $r+1$ since $\langle h(x)\rangle$ and $C^{\perp}$ are equivalent. Consider the codeword
$(x-1)h(x)$ of $\langle h(x)\rangle$. As $(x-1)h(x)=\sum_{i=0}^r \alpha^{r-i}x^{\frac{in}{r+1}}$, the Hamming weight of $(x-1)h(x)$ is $r+1$. Hence, the Hamming distance of $\langle h(x)\rangle$ is upper bounded by $r+1$.

Finally,
as $1$, $\Gb$ are roots of $g(x)$, the minimum distance is at least $3$ by Lemma \ref{lem:2.1}(iii). By the bound \eqref{eq:x1}, the minimum distance of $C$ is upper bounded by $3$.  Thus, the minimum distance of $C$ is exactly $3$. This completes the proof.
\end{proof}
\subsection{ Proof of part (ii) of Theorem \ref{thm:1.1}}.
\begin{proof}
Let $\beta\in\bar{\F}_q$ be an $n$th primitive  root of unity. Put $\alpha=\beta^{\frac{n}{r+1}}$. Then $\Ga\neq 1$ and $\alpha^{q-1}=\left(\beta^n\right)^{\frac{q-1}{r+1}}=1$, i.e., $\Ga$ is an element of $\F_q\setminus\{0,1\}$. Furthermore, $\Ga$ is a $(r+1)$th primitive  root of unity.
Since $\gcd\left(\frac{n}{r+1},r+1\right)$ divides $2$, there exist integers $a,b$ such that $a\times \frac{n}{r+1}+b(r+1)=2$. Put $\Gg=\Ga^a\in\F_q$. Then \[\gamma^{\frac{n}{r+1}}=\Ga^{\frac{an}{r+1}}=\Ga^{2-b(r+1)}=\alpha^2.\]
As $r\ge 3$, we must have $\Gg\in\F_q\setminus\{0,1\}$. Since $\Gg$ is a $(r+1)$th   root of unity, the linear polynomial $(x-\gamma)$ divides $x^n-1$. Furthermore, we have $\gamma^{\frac{n}{r+1}}-\alpha=\alpha^2-\alpha\neq 0$, i.e, $x-\Gg$ is not a factor of $x^{\frac{n}{r+1}}-\alpha$. This implies that $g(x):=(x-1)(x-\gamma)(x^{\frac{n}{r+1}}-\alpha)$ is a divisor of $x^n-1$. Let $C$ be the $q$-ary cyclic code of length $n$ generated by $g(x)$.

 By Lemma \ref{lem:3.1},
to show that the locality of $C$ is $r$, it is sufficient to show that the Hamming distance of $C^{\perp}$ is at most $r+1$. Put $h(x)=\frac{x^n-1}{g(x)}$. Then by Lemma \ref{lem:2.1}(i), it is sufficient to show that the Hamming distance of $\langle h(x)\rangle$ is at most $r+1$ since $\langle h(x)\rangle$ and $C^{\perp}$ are equivalent. Consider the codeword
$(x-1)(x-\Gg)h(x)$ of $\langle h(x)\rangle$. Since $(x-1)(x-\Gg)h(x)=\sum_{i=0}^r \alpha^{r-i}x^{\frac{in}{r+1}}$, the Hamming weight of $(x-1)(x-\Gg)h(x)$ is $r+1$. Hence, the Hamming distance of $\langle h(x)\rangle$ is upper bounded by $r+1$.

Now, we claim that the cyclic code $C$ generated by $g(x)$ has  minimum distance at least $4$.
We prove this claim by contradiction. Suppose that the minimum distance of $C$ were at most $3$. Then  there exists a nonzero polynomial $c(x)=c_0+c_1x^j+c_2x^k\in C$ with $0<j<k\le n-1$ divisible by $g(x)$.
Since  $\beta,\beta^{r+2},\beta^{1+2(r+1)}$ are roots of $x^{\frac{n}{r+1}}-\alpha$ and hence roots of $g(x)$, by Lemma \ref{lem:2.1}(ii) we have
\begin{equation}\label{eq:2}
0=\det\left(
      \begin{array}{ccc}
        1 & \beta^j & \beta^k \\
        1 & \beta^j\beta^{(r+1)j} & \beta^k\beta^{(r+1)k} \\
        1 & \beta^j\beta^{2(r+1)j} & \beta^k\beta^{2(r+1)k} \\
      \end{array}
    \right)=\beta^{j+k}(\beta^{(r+1)j}-1)(\beta^{(r+1)k}-1)(\beta^{(r+1)j}-\beta^{(r+1)k}).
\end{equation}
This forces that $n|j(r+1)$ or $n|(k-j)(r+1)$ or $n|k(r+1)$ since $\beta\in\bar{\F}_q$ is an $n$th primitive  root of unity. We claim that each of these three divisibilities leads to both $n|j(r+1)$ and $n|k(r+1)$.

{\bf Case 1.} $n|j(r+1)$.

Since $1,\beta,\beta^{r+2}$ are also the roots of $c(x)$, by Lemma \ref{lem:2.1}(ii) we have
\begin{equation}\label{eq:3}
\det\left(
      \begin{array}{ccc}
        1 & 1 & 1 \\
        1 & \beta^{j} & \beta^{k} \\
        1 & \beta^{(r+2)j} & \beta^{(r+2)k} \\
      \end{array}
    \right)=(\beta^j-1)(\beta^{k(r+2)}-1)-(\beta^k-1)(\beta^{j(r+2)}-1)=0.
\end{equation}

By using the condition $\beta^{j(r+1)}=1$, Equation \eqref{eq:3} is simplified to
\begin{eqnarray*}0&=&(\beta^j-1)(\beta^{k(r+2)}-1)-(\beta^k-1)(\beta^{j(r+2)}-1)\\
&=&(\beta^j-1)(\beta^{k(r+2)}-1)-(\beta^k-1)(\beta^{j}-1)\\
&=&(\beta^j-1)(\beta^{k(r+2)}-\beta^k)=\Gb^k(\beta^j-1)(\beta^{k(r+1)}-1).\end{eqnarray*}
Observe that $\beta$ is a primitive $n$th root of unity and $0<j<n$. It follows that $\beta^{k(r+1)}=1$, or equivalently $n|k(r+1)$.

{\bf Case 2.} $n|(k-j)(r+1)$.

 By using the condition $\beta^{(k-j)(r+1)}=1$, i.e., $\Gb^{j(r+1)}=\Gb^{k(r+1)}$, Equation \eqref{eq:3} is simplified to
\begin{eqnarray*}0&=&(\beta^j-1)(\beta^{k(r+2)}-1)-(\beta^k-1)(\beta^{j(r+2)}-1)\\
&=&(\beta^j-1)(\beta^{j(r+1)+k}-1)-(\beta^k-1)(\beta^{j(r+2)}-1)\\
&=&(\Gb^k-\Gb^j)(1-\Gb^{j(r+1)})=\Gb^j(\Gb^{k-j}-1)(1-\Gb^{j(r+1)}).\end{eqnarray*}
 Since $0<k\neq j<n$, we must have $n|j(r+1)$. This gives $\Gb^{k(r+1)}=\Gb^{j(r+1)}=1$. Therefore, we have $n|k(r+1)$ as well.

{\bf Case 3.} $n|k(r+1)$.

In this case, we can mimic the proof of Case 1 by swapping $j$ and $k$.

All three cases lead to the conclusion that $n|k(r+1)$ and $n|j(r+1)$. Set $h_1=\frac{j(r+1)}{n}$ and $h_2=\frac{k(r+1)}n$. Then $ h_1\neq h_2\in \{1,2,\dots,r\} $. Moreover, we have the following four identities.
\[\alpha^{h_1}=\Gb^{\frac{n}{r+1}\times h_1}=\Gb^j;\quad \alpha^{h_2}=\Gb^{\frac{n}{r+1}\times h_2}=\Gb^k;\quad \alpha^{2h_1}=\Gg^{\frac{n}{r+1}\times h_1}=\Gg^j;\quad \alpha^{2h_2}=\Gg^{\frac{n}{r+1}\times h_2}=\Gg^k.\]
Finally, we notice that $1,\beta,\gamma$ are the roots of $c(x)$. By Lemma \ref{lem:2.1}(ii), this gives
$$
0=\det\left(
      \begin{array}{ccc}
        1 & 1 & 1 \\
        1 & \beta^j & \beta^k \\
        1 & \gamma^j & \gamma^k \\
      \end{array}
    \right)=\det\left(
      \begin{array}{ccc}
        1 & 1 & 1 \\
        1 & \alpha^{h_1} & \Ga^{h_2} \\
        1 & \alpha^{2h_1} & \Ga^{2h_2} \\
      \end{array}
    \right)=(\alpha^{h_1}-1)(\alpha^{h_2}-1)(\alpha^{h_1}-\alpha^{h_2}).
$$
This is a contradiction  due to fact that $\alpha$ is a $(r+1)$th primitive   root of unity and $ h_1\neq h_2\in \{1,2,\dots,r\} $.

The minimum distance of $C$ is at most $4$ by the bound \eqref{eq:x1}.  The proof is completed.
\end{proof}
\subsection{A locally reparable codes of length $2(q-1)$ and distance $4$}
Finally, we present a $q$-ary optimal \LRC of length $n=2(q-1)$ and minimum distance $4$.

\begin{theorem} If $r+1$ divides $2(q-1)$,
then there exists a   $q$-ary optimal $[n=2(q-1),n-\frac{n}{r+1}-2,4]$ \LRC with locality  $r+1$.
\end{theorem}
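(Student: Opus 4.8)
The plan is to build the code as a cyclic code of length $n=2(q-1)$, imitating the proof of part (ii) of Theorem~\ref{thm:1.1} but with the generator polynomial adapted to the weaker divisibility hypothesis. Assume for now that $q$ is odd (the even case is discussed at the end). Write $\rho=r+1$ and $m=n/\rho$, fix a primitive $n$-th root of unity $\beta\in\bar{\F}_q$, and put $\alpha=\beta^{m}$, a primitive $\rho$-th root of unity. Since we only assume $\rho\mid 2(q-1)$, the element $\alpha$ itself need not lie in $\F_q$ (this happens precisely when $m$ is odd); but $\theta:=\alpha^{2}=\beta^{2m}$ always does, because $\theta^{q-1}=(\beta^{n})^{m}=1$, and $\theta\ne 1$ when $r\ge3$. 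Recall also that every element of $\F_q^{*}$ is an $n$-th root of unity (as $(q-1)\mid n$) and that $x^{q-1}+1$ factors over $\F_q$ into $(q-1)/2$ distinct irreducible quadratics $x^{2}-c$, $c\in\F_q^{*}$ a nonsquare.

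I would take $g(x)=(x^{2}-c)\,(x^{m}-\theta)$, where $x^{2}-c=(x-\beta^{i_0})(x+\beta^{i_0})$ for an odd integer $i_0$ chosen so that $\gcd(i_0-2,\rho)=1$ and neither $\beta^{i_0}$ nor $-\beta^{i_0}$ is a root of $x^{m}-\theta$; such an $i_0$ exists. Since $(x^{m}-\theta)\sum_{i=0}^{\rho-1}\theta^{\rho-1-i}x^{mi}=x^{n}-\theta^{\rho}=x^{n}-1$, we have $x^{m}-\theta\mid x^{n}-1$, and together with the coprimality of the two factors this gives $g\mid x^{n}-1$; moreover $g$ lies in $\F_q[x]$ because $c,\theta\in\F_q$. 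Put $C=\langle g\rangle$; it is cyclic (hence transitive) of dimension $n-\deg g=n-m-2=n-\frac{n}{r+1}-2$. For the locality I would argue exactly as in Theorem~\ref{thm:1.1}(ii): by Lemmas~\ref{lem:3.1} and~\ref{lem:2.1}(i) it is enough to produce a codeword of $\langle h\rangle$ of weight at most $r+1$, where $h=(x^{n}-1)/g$; and $(x^{2}-c)\,h(x)=(x^{n}-1)/(x^{m}-\theta)=\sum_{i=0}^{r}\theta^{r-i}x^{mi}$ has Hamming weight exactly $r+1$. So $C$ has the asserted locality, and, since for these parameters (and $r\ge3$) the bound~\eqref{eq:x1} equals $4$, it remains only to prove $d(C)\ge4$, after which optimality is automatic.

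To prove $d(C)\ge4$ I would suppose there is a nonzero codeword of weight $\le3$; by cyclicity it may be taken of the form $c(x)=c_0+c_1x^{j}+c_2x^{k}$ with $0\le j<k<n$, and it vanishes on every root of $g$, namely $\beta^{i_0}$, $-\beta^{i_0}$, and the roots $\beta^{2+\rho s}$ ($0\le s<m$) of $x^{m}-\theta$. Feeding various triples of these roots into the vanishing-determinant criterion of Lemma~\ref{lem:2.1}(ii): the triple $\beta^{2},\beta^{2+\rho},\beta^{2+2\rho}$ forces (using that $\beta^{\rho}$ has order $m$) one of $m\mid j$, $m\mid k$, $m\mid k-j$; the triple $\beta^{i_0},-\beta^{i_0},\beta^{2}$, together with $\gcd(i_0-2,\rho)=1$, eliminates the mixed alternatives and also rules out the residual weight-$\le2$ possibilities — in particular the word $x^{q-1}-1$, which is the reason one cannot take all roots of $g$ inside $\F_q^{*}$ — and leaves $m\mid j$ and $m\mid k$. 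Writing $j=mh_1$, $k=mh_2$ with $1\le h_1<h_2\le r$, one has $\beta^{j}=\alpha^{h_1}$ and $\beta^{k}=\alpha^{h_2}$, and a last Vandermonde determinant among three suitably chosen roots of $g$ evaluates, after these substitutions, to a nonzero expression (essentially $(\alpha^{h_1}-1)(\alpha^{h_2}-1)(\alpha^{h_2}-\alpha^{h_1})$ up to a unit), because $\alpha$ is a primitive $(r+1)$-th root of unity and $1\le h_1<h_2\le r$. This contradiction gives $d(C)\ge4$, hence $d(C)=4$ and $C$ is optimal.

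The hard part is the distance argument just sketched: compared with Theorem~\ref{thm:1.1}(ii), $1$ is generally not a root of $g$, so the Vandermonde case analysis must be carried out with the less symmetric root set $\{\beta^{\pm i_0}\}\cup\{\beta^{2+\rho s}\}$ and branches on the parities of $j,k$, and the quadratic factor $x^{2}-c$ must be chosen with care — the coprimality condition $\gcd(i_0-2,r+1)=1$ here plays the role that $\gcd(n/(r+1),r+1)\mid2$ plays in Theorem~\ref{thm:1.1}(ii) — so that no short codeword escapes. I would also treat the degenerate case $n/(r+1)=1$ separately (there the claimed parameters would force $C$ to be MDS, which is impossible for $n>q+1$, so this case must be excluded). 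Finally, when $q$ is even we have $x^{n}-1=(x^{q-1}-1)^{2}$ with no separate quadratic factor, and one uses the corresponding repeated-root cyclic code — replacing $x^{2}-c$ by $(x-a)^{2}$ for a suitable $a\in\F_q^{*}$ — obtaining the same parameters by an analogous, computationally lighter computation.
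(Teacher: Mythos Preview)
The paper's proof is far simpler than yours. It takes
\[
g(x)=(x-1)\,(x-\beta^{2})\,\bigl(x^{n/(r+1)}-\alpha\bigr),\qquad \alpha=\beta^{\,n/(r+1)}.
\]
Since $\beta^{2}$ is a primitive $(q-1)$-th root of unity, $\beta^{2}\in\F_q$, so this degree-$2$ cofactor already splits over $\F_q$ into two linear pieces. The point you missed is that $\beta$ itself is a root of $x^{n/(r+1)}-\alpha$; hence $g$ has the three \emph{consecutive} zeros $\beta^{0}=1,\ \beta,\ \beta^{2}$, and the BCH bound (Lemma~\ref{lem:2.1}(iii)) gives $d\ge4$ in one line. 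No Vandermonde case analysis whatsoever is needed. The locality argument is exactly the one you wrote.

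That said, your worry that $\alpha$ might fall outside $\F_q$ is legitimate: one has $\alpha\in\F_q$ iff $n/(r+1)=2(q-1)/(r+1)$ is even, i.e.\ iff $(r+1)\mid(q-1)$, and the paper's line $\alpha^{q-1}=(\beta^{n})^{(q-1)/(r+1)}=1$ tacitly assumes this stronger divisibility. So the paper's one-line distance argument literally covers only the subcase $(r+1)\mid(q-1)$. For the residual case $(r+1)\mid2(q-1)$ but $(r+1)\nmid(q-1)$, your replacement of $\alpha$ by $\theta=\alpha^{2}\in\F_q$ is a sensible fix, but the distance argument you sketch is only that---a sketch: the claim that the triple $\beta^{i_0},-\beta^{i_0},\beta^{2}$ ``eliminates the mixed alternatives and rules out the weight-$\le2$ possibilities'' is not justified, nor is the existence of an odd $i_0$ with both $\gcd(i_0-2,r+1)=1$ and $\beta^{\pm i_0}$ avoiding the roots of $x^{m}-\theta$. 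In short: for the main case use the paper's BCH-bound shortcut; your heavier machinery, if it is to be used at all, is only relevant to a boundary case on which the paper itself is silent, and there it still needs to be filled in.
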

\begin{proof} Let $\beta\in\bar{\F}_q$ be a $n$th primitive  root of unity. Then $\Gb^2$ is an element of $\F_q$. Put $\alpha=\beta^{\frac{n}{r+1}}$. Then $\Ga\neq 1$ and $\alpha^{q-1}=\left(\beta^n\right)^{\frac{q-1}{r+1}}=1$, i.e., $\Ga$ is an element of $\F_q\setminus\{0,1\}$.
 Let $g(x)=(x-1)(x-\beta^2)(x^{\frac{n}{r+1}}-\alpha)$. It is straightforward to verify that $g(x)$ is a divisor of $x^n-1$. Let $C$ be the cyclic code with generator polynomial $g(x)$.

By the similar augments in the proof of Theorem \ref{thm:1.1}(i), one can show that dual code of $C$ has minimum distance at most $r+1$.

Since $1,\Gb,\Gb^2$ are roots of $g(x)$, $C$ has distance  at least $4$.

It is easy to check that code $C$ meets the Singleton-type bound \eqref{eq:x1}. The proof is completed.
\end{proof}


\begin{thebibliography}{10}

\bibitem{BTV17} A. Barg, I. Tamo, and S. Vl\u{a}du\c{t}, {\it Locally recoverable codes on algebraic curves}, IEEE Trans. Inform. Theory {\bf 63}(8)(2017), 4928--4939.

\bibitem{BHHMV16} A. Barg, K. Haymaker, E. Howe, G. Matthews, and A. V¨¢rilly-Alvarado,
Locally recoverable codes from algebraic curves and surfaces, in Algebraic Geometry for Coding Theory and Cryptography, E.W. Howe, K.E. Lauter, and J.L. Walker, Editors, Springer, 2017, pp. 95--126.

\bibitem{FY14} M. Forbes and S. Yekhanin, {\it On the locality of codeword symbols in non-linear codes,} Discrete Mathematics
{\bf 324}(6)(2014), 78--84.

\bibitem{GHSY12} P. Gopalan, C. Huang, H. Simitci and S. Yekhanin, {\it On the locality of codeword symbols}, IEEE Trans. Inf. Theory {\bf 58}(11)(2012), 6925--6934.

\bibitem{HL07} J. Han and L. A. Lastras-Montano, {\it Reliable memories with subline accesses}, Proc. IEEE Internat. Sympos. Inform. Theory, 2007,  2531--2535.

\bibitem{HCL} C. Huang, M. Chen, and J. Li, {\it Pyramid codes: Flexible schemes to trade space for access efficiency in reliable data storage systems,} Sixth IEEE International Symposium on Network Computing and Applications, 2007, 79--86.

\bibitem{JMX17} L. Jin, L. Ma and C, Xing, {\it Construction of optimal locally repairable codes via automorphism groups of rational function fields}, https://arxiv.org/abs/1710.09638.

\bibitem{LMX17}
X. Li, L. Ma and C. Xing, {\it Optimal locally repairable codes via elliptic curves,} https://arxiv.org/abs/1712.03744.

\bibitem{LX} S. Ling and C. Xing, ``Coding Theory--a first course", Cambridge, 2004.

\bibitem{MS} F. J. MacWilliams and N. J. A. Sloane, ``The Theory of Error-Correcting Codes", North-Holland, 2006.

\bibitem{PD14} D. S. Papailiopoulos and A.G. Dimakis, {\it Locally repairable codes}, IEEE Trans. Inf. Theory {\bf 60}(10)(2014), 5843--5855.

\bibitem{PKLK12} N. Prakash, G.M. Kamath, V. Lalitha and P.V. Kumar, {\it Optimal linear codes with a local-error-correction property}, Proc. 2012 IEEE Int. Symp. Inform. Theory, 2012, 2776--2780.

\bibitem{SRKV13} N. Silberstein, A.S. Rawat, O.O. Koyluoglu and S. Vichwanath, {\it Optimal locally repairable codes via rank-matric codes}, Proc. IEEE Int. Symp. Inf. Theory,  2013, 1819--1823.




\bibitem{TB14} I. Tamo and A. Barg, {\it A family of optimal locally recoverable codes}, IEEE Trans. Inform. Theory {\bf 60}(8)(2014), 4661--4676.


\bibitem{TPD16} I. Tamo, D.S. Papailiopoulos and A.G. Dimakis, {\it Optimal locally repairable codes and connections to matroid theory},  IEEE Trans. Inform. Theory {\bf 62}(12)(2016), 6661--6671.



\end{thebibliography}
\end{document}